\def\c{\mathbf{c}}
\def\s{\mathbf{s}}
\def\u{\mathbf{u}}
\def\v{\mathbf{v}}
\def\w{\mathbf{w}}
\def\x{\mathbf{x}}
\def\z{\mathbf{z}}
\def\B{\mathbb{B}}
\def\C{\mathbb{C}}
\def\U{\mathbb{U}}
\def\bnz{\mathbb{B}_n^Z}
\def\unz{\mathbb{U}_n^Z}
\def\bOz{\mathbb{B}_1^Z}
\theoremstyle{definition}
\newtheorem{definition}{Definition}[section]
\newtheorem{theorem}{Theorem}[section]
\newtheorem{corollary}{Corollary}[theorem]
\newtheorem{remark}{Remark}[section]
\newtheorem{lemma}[theorem]{Lemma}
\newtheorem{proposition}[theorem]{Proposition}
\begin{document}

\preprint{APS/123-QED}

\title{On Quantum Computation Using Bias-Preserving Gates}
\author{Debadrito Roy}
 \affiliation{University of Southern California}
\author{Aryaman Manish Kolhe}%
\affiliation{International Institute of Information Technology, Hyderabad}
\author{V. Lalitha}
\affiliation{International Institute of Information Technology, Hyderabad}
\author{Navin Kashyap}
\affiliation{Indian Institute of Science, Bangalore}


\begin{abstract}
 Certain types of quantum computing platforms, such as those realized using Rydberg atoms or Kerr-cat qubits, are natively more susceptible to Pauli-$Z$ noise than Pauli-$X$ noise, or vice versa. On such hardware, it is useful to ensure that computations use only gates that maintain the $Z$-bias (or $X$-bias) in the noise. This is so that quantum error-correcting codes tailored for biased-noise models can be used to provide fault-tolerance on these platforms. In this paper, we follow up on the recent work of Fellous-Asiani et al.\ (\emph{npj Quantum Inf.,} 2025)  in studying the structure and properties of bias-preserving gates. Our main contributions are threefold: (1)~We give a novel characterization of $Z$-bias-preserving gates based on their decomposition as a linear combination of Pauli operators. (2)~We show that any $Z$-bias-preserving gate can be approximated arbitrarily well using only gates from the set $\{X,R_z(\theta),CNOT,CCNOT\}$, where $\theta$ is any irrational multiple of $2\pi$. (3)~We prove, by drawing a connection with coherence resource theory, that any $Z$-bias-preserving logical operator acting on the logical qubits of a Calderbank-Shor-Steane (CSS) code can be realized by applying $Z$-bias-preserving gates on the physical qubits. Along the way, we also demonstrate that $Z$-bias-preserving gates are far from being universal for quantum computation.
 \end{abstract}

\maketitle

\section{Introduction}

Standard error correction techniques assume single qubit depolarizing noise as the noise model for quantum systems, where $X, Y, Z$ errors occur with equal probability \cite{webster2015reducing}. Even in the case of noise channels other than depolarizing channels, we can adopt the Pauli twirling approximation to convert arbitrary noise channels to Pauli channels \cite{gutierrez2013approximation, geller2013efficient, cai2019constructing}, after which it will be sufficient to correct for $X$ and $Z$ type errors. However, in practice, many physical realizations, for example those realized using superconducting qubits, Rydberg atoms and Kerr-cat qubits, involve a bias towards either $X$ or $Z$ errors \cite{aliferis2009fault, cong2022hardware, puri2020bias}.

Biased error models are described in many ways in present literature depending on the requirements of the problems they try to solve. One approach is through stochastic noise with separate noise strengths for the $X$ and $Z$ errors where noise bias is defined as the ratio of these two values \cite{aliferis2008fault}. One can also model them by assuming that a certain unitary $U$ is applied without error, and is followed by Pauli noise $\Lambda_P$ \cite{martinez2025leveraging} which is a function of the probabilities with which $X$ and $Z$ errors occur. We follow the noise model used in \cite{fellous2025scalable} and assume \emph{perfect bias}, in which only $Z$ errors occur during computation in the case of $Z$-biased error. This is modeled using the Pauli channel with the only non-trivial errors being terms involving $Z$, that is, the channel is a linear combination of $I$ and $Z$ unitaries. This is physically motivated from the idea that it is possible to suppress bit flip errors exponentially in a realization of cat qubits with a linear overhead (in the size of the cat state) of phase flips being introduced \cite{mirrahimi2014dynamically, lescanne2020exponential}. 

 By tailoring the error correcting code to the biased error model, XZZX surface codes \cite{bonilla2021xzzx} and bias-tailored LDPC codes \cite{roffe2023bias} achieved better thresholds than surface codes. If fault-tolerant quantum computation is performed using gates which preserve the noise bias, then error correction becomes efficient in the circuit. Hence, it is of interest to study bias-preserving gates for a given biased error model. Bias-preserving gates have been studied in \cite{aliferis2009fault, guillaud2019repetition, fellous2025scalable}. This part of the literature will be dealt with in detail in the next section (Section \ref{sec:preliminaries}).

\section{Definitions and Background}\label{sec:preliminaries}

 Given that some physical systems are susceptible to biased noise, it is important to ensure that applying gates during quantum computation does not change this noise bias. Bias-preserving gates are quantum gates that preserve the noise bias of a biased error channel. For example, if we consider a $Z$-biased error model, and a $Z$ error occurs, then applying the Hadamard gate converts this error to a bit flip error because $HZ = XH$. As a result, Hadamard gate does not preserve the $Z$-biased noise. 

The notion of bias-preserving gates was first introduced in \cite{aliferis2008fault}, where the authors considered a $Z$ biased error model. They defined bias-preserving gates to be gates that are diagonal in the $Z$ basis as they commute with $Z$ errors applied on any qubit and preserve $Z$ bias even if the error occurs continuously during the application of the gate. We call these continuously-bias-preserving gates. By this definition, $CNOT$ is not a continuously-bias-preserving gate as $Z$ errors that occur during the application of the Hamiltonian that implements $CNOT$ can propagate as $X$ errors. In \cite{guillaud2019repetition}, the authors derive that regardless of the Hamiltonian used to implement $CNOT$, it cannot be made continuously-bias-preserving for finite dimensional systems.

However, recent studies like \cite{puri2020bias} show that with the use of cat qubit states, $CNOT$ can be implemented bias-preservingly.  This motivated the study of a more general class of bias-preserving gates in \cite{fellous2025scalable}. In \cite{fellous2025scalable}, $X$-bias-preserving gates are defined as gates for which $X$ errors upstream of the gate propagate as linear combinations of $X$ errors downstream from the gate. In \cite{tsutsui2024bias}, bias-preserving implementations of $S,H, CZ$ and rotation gates are studied using a repetition code for fault-tolerant computation. 

In this paper, we adhere to the definition of bias-preserving gates given by Fellous-Asiani et al.~\cite{fellous2023scalable}. However, we have chosen to take the $Z$-bias-preserving perspective, instead of the $X$-bias-preserving one in \cite{fellous2023scalable}, mainly because $Z$-bias-preserving gates are essentially permutations of computational basis states --- see Theorem~\ref{thm:bnz-permutation-fellous}. In any case, statements about $Z$-bias-preserving gates can be readily translated into corresponding statements about $X$-bias-preserving gates, as one set of gates is obtained from the other by conjugating with Hadamard gates.
 
 Let $\mathbb{U}_{n}$ denote the (multiplicative) group of unitary operators on $n$ qubits (i.e., $n$-qubit gates). We define the $Z$-restricted Pauli group and $Z$-type unitary operators, similar to their $X$-type counterparts in \cite{fellous2025scalable}. To this end, for a binary vector $\c = (c_1,\ldots,c_n) \in \{0,1\}^n$, we define $Z_\c = \bigotimes_{i=1}^n {Z_i}^{c_i}$, where $Z_i$ denotes the $Z$-gate acting on the $i$-th qubit. Thus, $Z_{\c}$ is the tensor product of $Z$-gates acting on the support of $\c$. 

\begin{definition} The $Z$-restricted Pauli group on $n$ qubits is 
$$\mathcal{P}_n^Z := \{Z_{\c}: \c \in \{0,1\}^n\}.$$
\end{definition}  

\begin{definition} The set of $Z$-type unitary operators on $n$ qubits is
$$\mathbb{U}_n^Z := \bigg\{U \in \mathbb{U}_n \;\big|\; U = \sum_i \alpha_iP_i \text{ for } P_i \in \mathcal{P}_n^Z, \alpha_i \in \mathbb{C} \bigg\}.$$
\end{definition}
Note that $\U_n^Z$ is a (multiplicative) subgroup of $\U_n$.

\begin{definition} \label{def:BnZ}
An $n$-qubit unitary operator $G$ is a \emph{$Z$-bias-preserving gate} if
$$\forall P \in \mathcal{P}_n^Z \ \exists A \in \mathbb{U}_n^Z \text{ such that } GP = AG.$$
The set of all such gates is denoted by $\mathbb{B}_n^Z$.
\end{definition}

The above definition says that if an error of the form $P \in \mathcal{P}_n^Z$ occurs in an $n$-qubit system in a quantum circuit right before the application of the gate $G$, the equivalent error at the output of the gate is given by $U \in \mathbb{U}_n^Z$ acting on the system of $n$-qubits. These are errors that can be  detected using a syndrome circuit for detecting $Z$ errors. This definition of bias-preserving gates is also considered in an earlier work of \cite{guillaud2019repetition} for the case of $2$-qubit systems, where it is also proved that such a set of bias-preserving gates is a closed group.

\begin{remark} \label{rem:H}
The $X$-bias-preserving gate set $\B_n^X$, defined analogously (see Definition~3 in \cite{fellous2023scalable}), is related to $\B_n^Z$ via conjugation by Hadamards: $B_n^X = H^{\otimes n} \B_n^Z H^{\otimes n}$. Note also that $H^{\otimes n}$ is not in $\B_n^Z$ (or for that matter, in $\B_n^X$), as it does not satisfy the requirement in Definition~\ref{def:BnZ}: for example, $H^{\otimes n} Z^{\otimes n} = X^{\otimes n} H^{\otimes n}$.
\end{remark}

Fellous-Asiani et al.\ \cite{fellous2025scalable} gave a useful characterization of $X$-bias-preserving gates, showing that these are precisely those unitaries that act as permutations of the $X$-basis states while also allowing the introduction of local phases. This result immediately translates to a characterization of $Z$-bias-preserving gates by replacing $X$-basis states with $Z$-basis (i.e., computational basis) states.


\begin{theorem} \label{thm:bnz-permutation-fellous} A unitary $V$ is in $\mathbb{B}_n^Z$ if and only if for all $\s\in \{0,1\}^n$, there exists a real phase $\phi_{\s,V}$ such that $V\ket{\s} = e^{i\phi_{\s,V}}\ket{\sigma_V(\s)}$ where $\sigma_V$ is some permutation acting on $\{0,1\}^n$.
\end{theorem}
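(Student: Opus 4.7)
The plan is to prove the two directions of this if-and-only-if separately, relying on the elementary observation that every element of $\mathbb{U}_n^Z$ is diagonal in the computational basis (because each $Z_\c$ is, and linear combinations of simultaneously diagonal operators stay diagonal).

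For the direction that assumes the permutation structure on $V$, I would fix an arbitrary $P = Z_\c \in \mathcal{P}_n^Z$, set $A := V P V^{-1}$, and compute its action on an arbitrary computational basis vector $|\t\rangle$ by writing $|\t\rangle = e^{-i\phi_{\s,V}} V|\s\rangle$ for the unique $\s$ with $\sigma_V(\s) = \t$. A short calculation gives $A|\t\rangle = (-1)^{\c \cdot \s}|\t\rangle$, so $A$ is diagonal in the computational basis with $\pm 1$ entries. Any such unitary is a real linear combination of $\{Z_{\c'}\}_{\c' \in \{0,1\}^n}$ (via the inverse Walsh--Hadamard transform), so $A \in \mathbb{U}_n^Z$ and hence $V \in \mathbb{B}_n^Z$.

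The converse direction is more substantive, and I would handle it by a joint-eigenspace argument. For $j = 1,\ldots,n$, set $A_j := V Z_j V^{-1}$; by hypothesis each $A_j$ lies in $\mathbb{U}_n^Z$, hence is diagonal in the computational basis. Since unitary conjugation preserves dimensions of joint eigenspaces, and the commuting family $\{Z_1,\ldots,Z_n\}$ has one-dimensional joint eigenspaces (each being $\mathrm{span}(|\s\rangle)$ for $\s \in \{0,1\}^n$), so does $\{A_1,\ldots,A_n\}$. Because the $A_j$ are simultaneously diagonal in the computational basis, each of their joint eigenspaces is automatically spanned by some subset of computational basis vectors; one-dimensionality then forces each such joint eigenspace to equal $\mathrm{span}(|\t\rangle)$ for a unique $\t \in \{0,1\}^n$. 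Now $V|\s\rangle$ is a joint eigenvector of $\{A_j\}$, inheriting its eigenvalue pattern from the action of $\{Z_j\}$ on $|\s\rangle$, so $V|\s\rangle = e^{i\phi_{\s,V}}|\sigma_V(\s)\rangle$ for some phase and some $\sigma_V(\s) \in \{0,1\}^n$; unitarity of $V$ then makes $\sigma_V$ a bijection of $\{0,1\}^n$.

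The delicate step is the claim that the joint eigenspaces of $\{A_j\}$ are each spanned by a single computational basis vector. This rests on combining two ingredients --- diagonality of every operator in $\mathbb{U}_n^Z$, and preservation of joint eigenspace dimensions under unitary conjugation --- neither of which is deep individually, but both of which are essential for the conclusion. Once those are in hand, the remaining bookkeeping (that $\sigma_V$ is well-defined and a permutation, and that the phases $\phi_{\s,V}$ extract cleanly) is routine.
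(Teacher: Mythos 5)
Your proof is correct, but it follows a genuinely different path from the paper's. The paper establishes this characterization (in its restated form, Theorem~\ref{thm:partition}) as a corollary of Theorem~\ref{thm:xz}: it writes $G$ in the $ZX$-decomposition $G = \sum_\v A_\v$, re-expresses each block $A_\v$ via the spectral decomposition of $Z_\u$ as $\sum_{\s \in S_\v} \beta_{\s,\v}\ket{\s}\bra{\s\oplus\v}$, translates the orthogonality and completeness conditions into disjointness and covering statements about the sets $S_\v$ and their shifts $\sigma_\v(S_\v)$, and then closes with a cardinality argument showing both families partition $\{0,1\}^n$. You instead argue operator-algebraically: you observe that $\mathbb{U}_n^Z$ is exactly the set of unitaries diagonal in the computational basis, conjugate the commuting family $\{Z_1,\dots,Z_n\}$ by $V$ to get a family $\{A_j\} \subset \mathbb{U}_n^Z$ that is simultaneously diagonal, note that unitary conjugation preserves the one-dimensionality of joint eigenspaces, and conclude that each joint eigenspace of $\{A_j\}$ is a single computational basis direction, so $V$ permutes basis vectors up to phase. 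For the easy direction you compute $VZ_\c V^{-1}$ directly and observe it is diagonal with $\pm 1$ entries, hence a (real) linear combination of $Z$-type Paulis. Both arguments are valid; yours trades the combinatorial bookkeeping with the $ZX$-decomposition for a clean structural fact about joint eigenspaces of commuting families, and it bypasses Theorem~\ref{thm:xz} entirely, which the paper treats as a standalone contribution. The one sharpening worth making explicit in your write-up is the step establishing that each of the $2^n$ eigenvalue patterns $((-1)^{s_1},\dots,(-1)^{s_n})$ is actually attained by some computational basis vector for $\{A_j\}$: this follows because conjugation by $V$ is a bijection from the $2^n$ joint eigenspaces of $\{Z_j\}$ (one per pattern) to joint eigenspaces of $\{A_j\}$ with the same patterns and dimensions, and since the $A_j$ are diagonal, these eigenspaces must be spanned by (disjoint, singleton) sets of basis vectors; you gesture at this but it deserves a sentence.
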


 In this work, we undertake a further study of the structure and properties of bias-preserving gates, building upon \cite{fellous2025scalable}.
The structure of the rest of the paper is as follows. In Section~\ref{sec:properties}, we give a characterization of bias-preserving gates based on their decomposition in terms of linear combinations of Pauli matrices. We further use this characterization to sketch an alternate proof of Theorem~\ref{thm:bnz-permutation-fellous}. In Section~\ref{sec:discrete-gate-set}, we show that a finite set of gates $\{X,R_z(\theta),CNOT,CCNOT\}$ can be used to approximate the set of all $Z$-bias-preserving gates to any desired accuracy. Next, in Section~\ref{sec:bipr-at-logical-level}, we draw the connection between bias-preserving gates and incoherent unitaries from coherence resource theory. We define the idea of equicoherent encoding maps and note that Calderbank-Shor-Steane (CSS) codes admit such encoding maps. Finally, we prove that for CSS codes, bias-preserving gates at the logical level can be realized by applying bias-preserving gates at the physical level.

\section{Characterizations of Bias-Preserving Gates}\label{sec:properties}

In this section, we first observe that the set of $Z$-bias-preserving gates is the normalizer in $\U_n$ of the subgroup of $Z$-type unitary operators. 
We then give an alternative characterization of bias-preserving gates in terms of linear combinations of Pauli gates, and use it to give an alternative proof of Theorem~\ref{thm:bnz-permutation-fellous}.

Recall that the normalizer, $\mathcal{N}(\mathcal{S})$, of a subgroup $\mathcal{S}$ of $\U_n$ is the set of all $G \in \U_n$ such that for each  $S \in \mathcal S$, there exists an $S' \in \mathcal{S}$ for which $GS = S'G$.
\begin{proposition} \label{prop:bnx-equal-to-normalizer-of-unx}
    The set $\mathbb{B}_n^Z$ is the normalizer in $\mathbb{U}_n$ of the subgroup $\mathbb{U}_n^Z$, i.e.,  $\mathbb{B}_n^Z = \mathcal{N}(\mathbb{U}_n^Z)$. Consequently, $B_n^Z$ is a closed subgroup of $\U_n$. 
\end{proposition}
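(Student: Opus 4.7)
The plan is to prove the two inclusions separately. One direction is essentially immediate: since $\mathcal{P}_n^Z \subseteq \mathbb{U}_n^Z$, any $G \in \mathcal{N}(\mathbb{U}_n^Z)$ automatically satisfies the defining condition of $\mathbb{B}_n^Z$ (with the $Z$-type conjugate of a Pauli serving as the required $A$), giving $\mathcal{N}(\mathbb{U}_n^Z) \subseteq \mathbb{B}_n^Z$.

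The substantive direction is $\mathbb{B}_n^Z \subseteq \mathcal{N}(\mathbb{U}_n^Z)$. Given $G \in \mathbb{B}_n^Z$ and an arbitrary $S \in \mathbb{U}_n^Z$, I would expand $S = \sum_i \alpha_i P_i$ with $P_i \in \mathcal{P}_n^Z$ and compute
\begin{equation*}
GSG^{-1} = \sum_i \alpha_i \, GP_iG^{-1} = \sum_i \alpha_i A_i,
\end{equation*}
where the existence of $A_i \in \mathbb{U}_n^Z$ with $GP_i = A_iG$ is guaranteed by $G \in \mathbb{B}_n^Z$. Each $A_i$ is itself a linear combination of elements of $\mathcal{P}_n^Z$, so $GSG^{-1}$ lies in the complex span of $\mathcal{P}_n^Z$. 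Since $GSG^{-1}$ is a unitary (conjugation of a unitary by a unitary), it belongs to $\mathbb{U}_n^Z$ by definition. Hence $G \in \mathcal{N}(\mathbb{U}_n^Z)$.

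For the ``consequently'' clause, being a subgroup is immediate from the normalizer structure. For closedness, I would note that $\mathbb{U}_n^Z$ is itself closed in $\mathbb{U}_n$ (it is the intersection of the compact group $\mathbb{U}_n$ with the finite-dimensional linear subspace $\mathrm{span}(\mathcal{P}_n^Z)$). The normalizer condition $G\mathbb{U}_n^Z G^{-1} \subseteq \mathbb{U}_n^Z$ can then be expressed as $\bigcap_{S \in \mathbb{U}_n^Z} \{G \in \mathbb{U}_n : GSG^{-1} \in \mathbb{U}_n^Z\}$, which is an intersection of preimages of a closed set under continuous maps, hence closed.

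There is no real obstacle here; the only point requiring a moment of care is recognizing that to conclude $GSG^{-1} \in \mathbb{U}_n^Z$ one must verify both the linear-combination-of-$Z$-Paulis form \emph{and} the unitarity, with the latter coming essentially for free because $G$ itself is unitary. This makes the argument cleaner than it might initially appear: the bias-preserving condition on the ``skeleton'' $\mathcal{P}_n^Z$ automatically extends by linearity to the full linear span, and unitarity is preserved by conjugation.
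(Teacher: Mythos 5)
Your proof is correct and follows essentially the same route as the paper: both directions of the inclusion are argued identically (Paulis sit inside $\mathbb{U}_n^Z$ for one direction; extend by linearity and note unitarity of $GSG^{-1}$ for the other), and closedness comes from $\mathbb{U}_n^Z$ being closed plus the general fact that the normalizer of a closed subgroup is closed. You merely unpack that last fact a bit more explicitly (as an intersection of preimages of a closed set under conjugation maps) than the paper, which simply cites it.
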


\begin{proof}
The first statement is more or less definitional. If $G \in \mathcal{N}(\U_n^Z)$, then since any $P \in \mathcal{P}_n^Z$ is also in $\U_n^Z$, there exists an $A \in \U_n^Z$ such that $GP = AG$; hence, $G \in \B_n^Z$. Conversely, if $G \in \B_n^Z$, then for any $U = \sum_i \alpha_i P_i \in \U_n^Z$, there exist $A_i \in \U_n^Z$ such that $GU = \sum_i \alpha_i (GP_i) = \sum_i \alpha_i (A_i G) = AG$, where $A := \sum_i \alpha_i A_i$. Note that, since $A = GUG^\dag$, it is in $\U_n$, and moreover, since each $A_i$ is a linear combination of gates in $\mathcal{P}_n^Z$, so is $A$. Therefore, $A$ is in $\U_n^Z$, from which we conclude that $G \in \mathcal{N}(\U_n^Z)$. This proves that $\mathbb{B}_n^Z = \mathcal{N}(\mathbb{U}_n^Z)$.

The set $\unz$ is a closed subgroup of the unitary group $\U_n$, and the normalizer of a closed subgroup is closed. Hence, $B_n^Z$ is also a closed subgroup of $\U_n$. 
\end{proof}

As observed previously (Remark~\ref{rem:H}), $H^{\otimes n}$ is not in $\B_n^Z$. Consequently, $\B_n^Z$ is not dense in $\U_n$ for all $n$, which yields the corollary below to Proposition~\ref{prop:bnx-equal-to-normalizer-of-unx}. As defined in \cite{aharonov2003simple}, a set of quantum gates $\mathcal{S}$ is said to be \emph{strictly universal} if, for all sufficiently large $n$, the subgroup generated by $\mathcal{S}$ is (up to a global phase) dense in $\mathbb{U}_n$.

\begin{corollary}
\label{cor:BnZ_not_strictly_universal}
    The set of $Z$-bias-preserving gates is not strictly universal for quantum computation. 
\end{corollary}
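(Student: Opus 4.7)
The plan is to deduce the corollary directly from Proposition~\ref{prop:bnx-equal-to-normalizer-of-unx} together with Remark~\ref{rem:H}. Since $\B_n^Z$ is a \emph{subgroup} of $\U_n$, any subgroup $\langle \mathcal{S} \rangle$ generated by a subset $\mathcal{S} \subseteq \B_n^Z$ is contained in $\B_n^Z$. Since $\B_n^Z$ is also \emph{closed} in $\U_n$, the topological closure $\overline{\langle \mathcal{S}\rangle}$ still sits inside $\B_n^Z$. So to rule out strict universality, it suffices to exhibit, for each sufficiently large $n$, a unitary in $\U_n$ that does not lie in $\B_n^Z$ even after multiplication by a global phase.

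The natural candidate is $H^{\otimes n}$, which Remark~\ref{rem:H} already identifies as lying outside $\B_n^Z$. To upgrade ``outside $\B_n^Z$'' to ``outside $\B_n^Z$ modulo any global phase,'' I would note that $e^{i\theta}I \in \U_n^Z \subseteq \B_n^Z$ for every $\theta \in \real$, so $\B_n^Z$ is stable under global phase: if $e^{i\theta}H^{\otimes n}$ were in $\B_n^Z$, then
$$H^{\otimes n} = (e^{-i\theta}I)(e^{i\theta}H^{\otimes n})$$
would belong to $\B_n^Z$ as well, contradicting Remark~\ref{rem:H}. Thus no representative of $H^{\otimes n}$ modulo global phase lies in $\B_n^Z$.

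Putting these pieces together, any subset $\mathcal{S} \subseteq \B_n^Z$ has $\overline{\langle\mathcal{S}\rangle} \subseteq \B_n^Z$, and this closure misses $H^{\otimes n}$ (and all its phase multiples). Hence $\overline{\langle\mathcal{S}\rangle}$ is not dense in $\U_n$ up to global phase for any $n \geq 1$, which by the definition of strict universality recalled before the corollary statement establishes the claim. The one step that requires any care—and it is slight—is handling the global-phase ambiguity, but the closure of $\B_n^Z$ under scalar multiples of the identity disposes of it in one line; no further obstacle is anticipated.
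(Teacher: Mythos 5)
Your proposal is correct and follows essentially the same route as the paper: combine Remark~\ref{rem:H} ($H^{\otimes n}\notin\B_n^Z$) with Proposition~\ref{prop:bnx-equal-to-normalizer-of-unx} ($\B_n^Z$ is a closed subgroup of $\U_n$) to conclude that the closure of any subgroup generated from within $\B_n^Z$ stays inside $\B_n^Z$ and hence cannot be dense. Your explicit treatment of the ``up to global phase'' clause, via $e^{i\theta}I\in\U_n^Z\subseteq\B_n^Z$, is a small detail the paper leaves implicit, but it does not change the argument.
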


In fact, bias-preserving gates are far from being a universal gate set. Not only is $H^{\otimes n}$ not in $\B_n^Z$, it cannot even be well-approximated by any gate in $B_n^Z$. Indeed, this can be quantified in terms of the worst-case error when an $n$-qubit gate $U$ is replaced by an $n$-qubit gate $V$ in an arbitrary quantum circuit (see \cite[Section~4.5.3]{nielsen_and_chuang}): $\mathcal{E}(U,V) := \max_{\ket\psi} \|(U-v) \ket\psi\|$, where $\| \cdot \|$ is the $L^2$-norm and the max is over $n$-qubit pure states. In order for $V$ to serve as a good approximation for $U$, we must have $\mathcal{E}(U,V)$ being arbitrarily small; but this is not possible when $U = H^{\otimes n}$ and $V$ is in $\B_n^Z$.

\begin{lemma} \label{lem:Hn}
For $U = H^{\otimes n}$ and any $V \in \B_n^Z$, we have $\mathcal{E}(U,V) \ge \sqrt{2(1-2^{-n/2})}$.
\end{lemma}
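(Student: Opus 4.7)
The plan is to test the worst-case error $\mathcal{E}(U,V)$ on a single computational basis input and exploit the phased-permutation structure of $V \in \B_n^Z$ given by Theorem~\ref{thm:bnz-permutation-fellous}. The point is that for any $\s \in \{0,1\}^n$, the vector $V\ket{\s}$ is (up to a global phase) a single computational basis state $\ket{\sigma_V(\s)}$, whereas $U\ket{\s} = H^{\otimes n}\ket{\s}$ is the uniform $\pm$-superposition of all $2^n$ computational basis states. This geometric mismatch --- one vector supported on a single basis element, the other maximally spread --- should force the claimed lower bound.

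Concretely, I would fix any $\s \in \{0,1\}^n$, set $\ket{\psi} = \ket{\s}$, and expand
$$\|(U-V)\ket{\s}\|^2 = 2 - 2\,\mathrm{Re}\,\langle\s| V^\dagger U |\s\rangle.$$
By Theorem~\ref{thm:bnz-permutation-fellous}, $\bra{\s} V^\dagger = e^{-i\phi_{\s,V}}\bra{\sigma_V(\s)}$, and a one-line computation then gives
$$\langle\s|V^\dagger U|\s\rangle = e^{-i\phi_{\s,V}}\,\langle\sigma_V(\s)|H^{\otimes n}|\s\rangle = \frac{(-1)^{\s\cdot\sigma_V(\s)}}{2^{n/2}}\,e^{-i\phi_{\s,V}}.$$
This complex number has modulus exactly $2^{-n/2}$, so its real part is at most $2^{-n/2}$ irrespective of $\sigma_V$ and $\phi_{\s,V}$. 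Substituting back yields $\|(U-V)\ket{\s}\|^2 \ge 2(1-2^{-n/2})$, and since $\mathcal{E}(U,V)$ is a maximum over all pure states, the bound $\mathcal{E}(U,V) \ge \sqrt{2(1-2^{-n/2})}$ follows.

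I do not anticipate a genuine obstacle: the inequality essentially falls out of Theorem~\ref{thm:bnz-permutation-fellous} combined with the elementary identity $|\langle\t|H^{\otimes n}|\s\rangle| = 2^{-n/2}$. The only conceptual step is recognizing that every $V \in \B_n^Z$ acts on any computational basis state as a phased permutation, so $|\langle\t|V|\s\rangle|$ is either $0$ or $1$, while the Hadamard forces $|\langle\t|U|\s\rangle|$ to be uniformly small across $\t$; these two behaviors cannot be reconciled within $L^2$ distance less than $\sqrt{2(1-2^{-n/2})}$ on any single computational basis input, which is already enough to lower-bound the maximum over all $\ket{\psi}$.
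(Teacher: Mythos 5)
Your proof is correct and follows essentially the same route as the paper: test $\mathcal{E}(U,V)$ on a single computational basis state, use Theorem~\ref{thm:bnz-permutation-fellous} to write $V$'s output as a phased basis state, and exploit the fact that every computational basis state has overlap of magnitude exactly $2^{-n/2}$ with $H^{\otimes n}\ket{\s}$. The paper simply fixes $\s = \mathbf{0}$ (so $U\ket{0}^{\otimes n} = \ket{+}^{\otimes n}$) rather than letting $\s$ be arbitrary, which is an inessential difference.
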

\begin{proof}
For $U = H^{\otimes n}$ and $V \in \B_n^Z$, we have 
\begin{align*} 
\mathcal{E}(U,V) \ &\ge \ \|U\ket{0}^{\otimes n} - V\ket{0}^{\otimes n}\| \\
&= \ \|\ket{+}^{\otimes n} - e^{i\phi} \ket{\mathrm{v}}\|,
\end{align*}
for some computational basis state $\ket{\mathrm{v}}$ and phase factor $e^{i\phi}$, by Theorem~\ref{thm:bnz-permutation-fellous}. 
Now, use the fact that for pure states $\ket\varphi$ and $\ket\psi$, we have $\|\ket\varphi - \ket\psi\|^2 = 2[1-\mathrm{Re}\braket{\varphi|\psi}] \ge 2 \, [\, 1-| \! \braket{\varphi|\psi} \! | \, ]$. Thus, 
$$
\|\ket{+}^{\otimes n} - e^{i\phi} \ket{\mathrm{v}}\|^2 \ \ge \ 2 \, [1 - 2^{-n/2}],
$$
from which the result follows. 
\end{proof}

We next introduce a novel way of viewing bias-preserving gates, in terms of their decompositions as linear combinations of Pauli matrices. Recall that for $\u = (u_1,u_2,\ldots,u_n) \in \{0,1\}^n$, we set $Z_\u := \bigotimes_i {Z_i}^{u_i}$; we define $X_\u := \bigotimes_i {X_i}^{u_i}$ analogously.
Since the $n$-qubit Pauli matrices form a basis of the space of all $2^n\times 2^n$ matrices, we can write an arbitrary $G \in \U_n$ uniquely as
$G = \sum_{\v\in \{0,1\}^n} \sum_{\u\in \{0,1\}^n} \alpha_{\u,\v} Z_\u X_\v$. Thus, setting $A_\v := \sum_{\u\in \{0,1\}^n}\alpha_{\u,\v}Z_\u X_\v$, we have $G = \sum_{\v\in \{0,1\}^n}A_\v$. We will refer to this as the $ZX$-decomposition of the unitary $G$.

In the following, we give an alternate characterization of bias-preserving gates in terms of their $ZX$-decompositions.

\begin{theorem} \label{thm:xz}
    An $n$-qubit gate $G$ with $ZX$-decomposition $G = \sum_{\v\in\{0,1\}^n}A_\v$ is $Z$-bias-preserving if and only if the following two conditions hold:
    \begin{enumerate}
        \item {[Orthogonality]} \ For all $\v,\w\in \{0,1\}^n$ such that $v\neq w$, $$A_\v A_\w^\dagger = 0.$$
        \item {[Completeness]} \ $\sum_{\v\in\{0,1\}^n}A_\v A_\v^{\dagger} = I.$
    \end{enumerate}
\end{theorem}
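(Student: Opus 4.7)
The plan is to work entirely in the $ZX$-decomposition. Write $A_\v = B_\v X_\v$ where $B_\v := \sum_\u \alpha_{\u,\v} Z_\u$ is $Z$-diagonal, and note two elementary identities: $X_\v Z_\c = (-1)^{\v\cdot\c} Z_\c X_\v$, and every $B_\v$ commutes with every $Z_\c$. Combining them yields the key identity
\begin{equation*}
A_\v Z_\c A_\w^\dagger \ = \ (-1)^{\v\cdot\c}\, Z_\c\, (A_\v A_\w^\dagger),
\end{equation*}
with $A_\v A_\w^\dagger = B_\v X_{\v\oplus\w} B_\w^\dagger$. Because the $ZX$-decomposition of any operator is unique, a unitary lies in $\U_n^Z$ iff its $X_\mathbf{e}$-component vanishes for every $\mathbf{e}\ne 0$, and $G\in\B_n^Z$ iff this holds for each $GZ_\c G^\dagger$.

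For the \emph{if} direction, assume (1) and (2). Orthogonality kills the cross-terms in $GG^\dagger = \sum_{\v,\w} A_\v A_\w^\dagger$, and completeness then gives $GG^\dagger = \sum_\v A_\v A_\v^\dagger = I$, so $G$ is unitary. The key identity together with orthogonality further gives
\begin{equation*}
GZ_\c G^\dagger \ = \ Z_\c \sum_\v (-1)^{\v\cdot\c}\, A_\v A_\v^\dagger,
\end{equation*}
a product of $Z$-diagonal unitaries, hence in $\U_n^Z$. Thus $G \in \B_n^Z$.

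For the \emph{only if} direction, assume $G \in \B_n^Z$. Commuting $X_{\v\oplus\w}$ past $B_\w^\dagger$ (which conjugates a $Z$-diagonal matrix into another $Z$-diagonal matrix) lets me write $A_\v A_\w^\dagger = C_{\v,\w}\, X_{\v\oplus\w}$ with $C_{\v,\w}$ a $Z$-diagonal matrix satisfying $C_{\v,\w} = 0 \Leftrightarrow A_\v A_\w^\dagger = 0$. Grouping the expansion of $GZ_\c G^\dagger$ by $\mathbf{e} := \v\oplus\w$ and using the key identity, the $X_\mathbf{e}$-coefficient of $GZ_\c G^\dagger$ is $Z_\c \sum_{\v} (-1)^{\v\cdot\c}\, C_{\v,\v\oplus\mathbf{e}}$. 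The bias-preserving hypothesis forces this to vanish for every $\mathbf{e}\ne 0$ and every $\c$, which, since $Z_\c$ is invertible, reduces to
\begin{equation*}
\sum_{\v\in\{0,1\}^n} (-1)^{\v\cdot\c}\, C_{\v,\v\oplus\mathbf{e}} \ = \ 0 \qquad \text{for every } \c.
\end{equation*}
Because the Walsh--Hadamard matrix $[(-1)^{\v\cdot\c}]_{\v,\c}$ is invertible, each $C_{\v,\v\oplus\mathbf{e}}$ must vanish, which is precisely condition (1). Condition (2) then drops out of $GG^\dagger = I$ once (1) is used to discard the cross-terms.

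The main delicate step is the use of linear independence of the characters $\c \mapsto (-1)^{\v\cdot\c}$ to decouple the $\v$-sum: the bias-preserving hypothesis gives one linear relation among the $\{C_{\v,\v\oplus\mathbf{e}}\}_\v$ for each $\c$, and invertibility of the Walsh--Hadamard transform promotes these $2^n$ relations into the per-$\v$ vanishing needed for orthogonality. Everything else is algebraic bookkeeping of the $ZX$-decomposition; in particular, the argument never invokes Theorem~\ref{thm:bnz-permutation-fellous}, leaving open the possibility of using Theorem~\ref{thm:xz} to re-derive it.
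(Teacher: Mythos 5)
Your proof is correct and follows essentially the same route as the paper's: both compute the $ZX$-decomposition of $GZ_\c G^\dagger$, isolate the $X_{\mathbf{e}}$-components via the key commutation identity and the grouping $\mathbf{e} = \v\oplus\w$, invoke invertibility of the Walsh--Hadamard matrix to force each $A_\v A_{\v\oplus\mathbf{e}}^\dagger$ to vanish, and then read off completeness from $GG^\dagger = I$. Your explicit factorization $A_\v A_\w^\dagger = C_{\v,\w} X_{\v\oplus\w}$ with $C_{\v,\w}$ $Z$-diagonal is a slightly cleaner way to justify the step the paper treats more tersely, but it is the same argument.
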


\begin{proof}
    For any gate $G \in \U_n$, we have, via its $ZX$-decomposition, 
    \begin{align}
    GZ_{\x}G^\dagger
    &= Z_{\x} \cdot \sum_{\v,\w\in \{0,1\}^n} (-1)^{\x\cdot\v}A_{\v}A_{\w}^\dagger \nonumber \\
    &= Z_{\x} \bigg(\sum_{\v\in \{0,1\}^n}(-1)^{\x\cdot\v}A_{\v}A_{\v}^\dagger 
    \nonumber \\
    & \ \ \ \ \ \ \ \ \ +
    \sum_{\substack{\v,\w \in \{0,1\}^n: \\ \v \ne \w}}(-1)^{\x\cdot\v}A_{\v} A_{\w}^\dagger\bigg). \label{eq:xz-lemma-line}
    \end{align}

    Now, note that 
    $$A_{\v}A_{\w}^{\dag} = \sum_{\u,\u'\in \{0,1\}^n}\alpha_{\u,\v} \alpha^*_{\u',\w} Z_\u X_{\v \oplus \w} Z_{\u'}.$$ Thus, $A_{\v}A_{\v}^{\dag}$ involves only $Z$ terms, while for $\v \ne \w$, $A_{\v}A_{\w}^{\dag}$ contains $X$ terms. Therefore, if the orthogonality condition $A_{\v}A_{\w}^{\dag} = 0$ holds for all $\v \ne \w$, then $GZ_{\x}G^{\dag}$ contains no $X$ terms, so that $G Z_{\x} G^{\dag} \in \U_n^Z$. Thus, $G$ is $Z$-bias-preserving. (The completeness condition simply enforces $GG^{\dag} = I$.)
    
    Conversely, suppose that $G$ is $Z$-bias-preserving. Then $G Z_{\x} G^{\dag}$ must be in $\U_n^Z$, so that there can be no $X$ terms in \eqref{eq:xz-lemma-line}. Therefore, re-writing the summation in \eqref{eq:xz-lemma-line} as 
    $$
    \sum_{\substack{\z \in \{0,1\}^n \\ \z \ne \mathbf{0}}} \sum_{\v \in \{0,1\}^n} (-1)^{\x\cdot\v} A_{\v}A_{\v \oplus \z}^{\dag},
    $$
    we infer that for each $\z \in \{0,1\}^n$, $\z \ne \mathbf{0}$, we must have
    \begin{equation}
    \sum_{\v \in \{0,1\}^n} (-1)^{\x\cdot\v} A_{\v}A_{\v \oplus \z}^{\dag} = 0,
    \label{eq:Xz_contrib}
    \end{equation}
    since the left-hand side constitutes precisely the $X_{\z}$ contribution to $GZ_{\x}G^{\dag}$. 
    
    Moreover, the equality in \eqref{eq:Xz_contrib} must hold for every $\x \in \{0,1\}^n$, since $G Z_{\x} G^{\dag} \in \U_n^Z$ for all $\x \in \{0,1\}^n$. Thus, for every $\z \in \{0,1\}^n$, $\z \ne \mathbf{0}$, we must have $2^n$ equations of the form \eqref{eq:Xz_contrib}, one for each $\x \in \{0,1\}^n$. These $2^n$ equations can be expressed as a single matrix equation: $\textbf{H} \cdot \textsf{V}_{\z} = 0$,
    where $\textbf{H} = {[(-1)^{\x\cdot\v}]}_{\x,\v \in \{0,1\}^n}$ is a $2^n \times 2^n$ matrix, and $\textsf{V}_{\z}$ is a $2^n \times 1$ column vector with entries $A_{\v}A_{\v \oplus \z}^{\dag}$, $\v \in \{0,1\}^n$. In fact, $\textbf{H}$ is the Hadamard matrix 
    $\begin{bmatrix}
    1 & 1 \\
    1 & -1
    \end{bmatrix}^{\otimes n},$
    which is invertible. Therefore, $\textbf{H} \cdot \textsf{V}_{\z} = 0$ implies that $\textsf{V}_{\z} = 0$, and this holds for all $\z \ne \mathbf{0}$. From this, we conclude that the orthogonality condition holds: $A_{\v}A_{\w}^{\dag} = 0$ for all $\v \ne \w$.

    Returning to \eqref{eq:xz-lemma-line}, we now have 
    $$
    GZ_{\x}G^\dagger = Z_{\x} \cdot \sum_{\v\in \{0,1\}^n}(-1)^{\x\cdot\v}A_{\v}A_{\v}^\dagger.
    $$
    Setting $\x = \mathbf{0}$ above, we obtain 
    $$
    \sum_{\v\in \{0,1\}^n} A_{\v}A_{\v}^\dagger = GG^{\dag} = I.
    $$
    which is the completeness condition.
\end{proof}

We next demonstrate how the characterization of $Z$-bias-preserving gates stated in Theorem~\ref{thm:bnz-permutation-fellous} can be recovered from the above theorem.

\begin{theorem}[Re-statement of Theorem~\ref{thm:bnz-permutation-fellous}] \label{thm:partition}
An $n$-qubit gate $G$ is $Z$-bias-preserving iff it is of the form $G = \sum_{\s \in \{0,1\}^n} e^{i\phi_{\s}} \ket{\sigma(\s)} \bra{\s}$ for some choice of real phases $\phi_{\s}$ and some permutation $\sigma$ acting on $\{0,1\}^n$. 
\end{theorem}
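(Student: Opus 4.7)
The plan is to derive Theorem~\ref{thm:partition} as a direct consequence of Theorem~\ref{thm:xz}. First, I would unpack the $ZX$-decomposition by writing each summand as $A_{\v} = B_{\v} X_{\v}$, where $B_{\v} := \sum_{\u} \alpha_{\u,\v} Z_{\u}$ is diagonal in the computational basis with eigenvalues $\beta_{\t,\v} := \bra{\t} B_{\v} \ket{\t}$. This exposes the ``shifted-diagonal'' form of each summand, $A_{\v}\ket{\s} = \beta_{\s\oplus\v,\v}\ket{\s\oplus\v}$, from which the matrix entry $\bra{\t}G\ket{\s}$ collapses to the single contribution $\beta_{\t,\s\oplus\t}$. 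Thus the coefficients $\beta_{\t,\v}$ are simply the entries of $G$ in a mildly permuted indexing: $\beta_{\t,\v} = \bra{\t} G \ket{\t\oplus\v}$.

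For the forward direction I would then compute $(A_{\v} A_{\w}^{\dag})_{\t,\s}$ in this representation. The key calculation shows this entry is nonzero only on the diagonal $\t\oplus\s = \v\oplus\w$, where it equals $\beta_{\t,\v}\,\beta_{\s,\w}^{*}$. After the substitution $\r = \t\oplus\v = \s\oplus\w$, the orthogonality condition of Theorem~\ref{thm:xz} becomes: for every $\r$ and every pair $\v\ne\w$, at least one of $\beta_{\r\oplus\v,\v}$ and $\beta_{\r\oplus\w,\w}$ vanishes. Translated back via the identification $\beta_{\r\oplus\v,\v} = \bra{\r\oplus\v}G\ket{\r}$, this says precisely that each column of $G$ contains at most one nonzero entry. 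Unitarity of $G$ then upgrades this to exactly one nonzero entry of unit modulus per column, and column-orthogonality places the nonzero entries of distinct columns in distinct rows. Reading off the row index and phase in column $\s$ as $\sigma(\s)$ and $e^{i\phi_{\s}}$ gives the desired form.

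For the converse, rather than rebuilding the orthogonality/completeness decomposition, it is cleanest to verify Definition~\ref{def:BnZ} directly. If $G = \sum_{\s} e^{i\phi_{\s}}\ket{\sigma(\s)}\bra{\s}$, a one-line computation yields
\[
G\, Z_{\x}\, G^{\dag} \;=\; \sum_{\s} (-1)^{\x\cdot\s}\,\ket{\sigma(\s)}\bra{\sigma(\s)},
\]
which is diagonal in the computational basis. Every diagonal unitary expands uniquely in the basis $\{Z_{\c} : \c \in \{0,1\}^n\}$ of diagonal operators and hence lies in $\mathbb{U}_n^Z$; since $\mathcal{P}_n^Z = \{Z_{\x}\}$, this verifies $G \in \mathbb{B}_n^Z$.

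The main obstacle is the index bookkeeping in the forward direction: the shifted-diagonal structure of each $A_{\v}$ has to be tracked carefully enough that the many orthogonality relations $A_{\v} A_{\w}^{\dag} = 0$ consolidate into the single clean statement ``each column of $G$ is $1$-sparse.'' Once that reduction is in hand, the unitarity upgrade and the converse calculation are both essentially immediate.
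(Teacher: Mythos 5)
Your proof is correct, and it follows the paper's general strategy of deriving the result from Theorem~\ref{thm:xz} via the shifted-diagonal structure $A_{\v}\ket{\s} = \beta_{\s\oplus\v,\v}\ket{\s\oplus\v}$, but the way you exploit the orthogonality condition is organized differently and arguably more directly. The paper's argument is set-theoretic: it defines $S_{\v} = \{\s : \beta_{\s,\v}\neq 0\}$, shows that orthogonality forces the sets $\sigma_{\v}(S_{\v})$ to be pairwise disjoint and that completeness forces the $S_{\v}$ to cover $\{0,1\}^n$, and then a cardinality sandwich $2^n \le \sum_{\v}|S_{\v}| \le 2^n$ shows both families are partitions, from which the permutation-with-phases form is assembled. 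You instead translate $A_{\v}A_{\w}^{\dag}=0$ all the way down to matrix entries of $G$ itself, via the identity $\beta_{\r\oplus\v,\v} = \bra{\r\oplus\v}G\ket{\r}$, and extract the single clean statement ``each column of $G$ is $1$-sparse.'' You then avoid the completeness condition entirely by appealing directly to unitarity of $G$ (which is equivalent information) to upgrade to exactly one unit-modulus entry per column in distinct rows. Both routes are sound; yours replaces the paper's combinatorial counting with a direct linear-algebra reading of the same data, which is slightly more economical. Your treatment of the ``if'' direction is also a genuine, short verification (showing $GZ_{\x}G^{\dag}$ is diagonal and hence in $\mathbb{U}_n^Z$), where the paper simply defers to the argument in the cited reference.
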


\begin{proof}
For the ``if'' part, a direct verification of the $Z$-bias-preserving property of $G = \sum_{\s} e^{i\phi_{\s}} \ket{\sigma(\s)} \bra{\s}$ is possible, as was done in \cite{fellous2023scalable} for their proof of Theorem~\ref{thm:bnz-permutation-fellous}. We prove the ``only if'' part here. For this, we utilize the spectral decomposition of $Z_{\u}$, which is of the form:
$$
Z_{\u} = \sum_{\s \in \{0,1\}^n} a_{\s,\u} \ket{\s}\bra{\s}, \ \text{ with } a_{\s,\u} = \pm 1 \  \forall \s.
$$ 
This allows us to write $A_\v = \sum_{\u}\alpha_{\u,\v}Z_\u X_\v$ as
\begin{eqnarray}
A_{\v} \ & = & \ \left(\sum_{\s \in \{0,1\}^n} \beta_{\s,\v} \ket{\s}\bra{\s}\right) X_{\v} \notag \\
& = & \sum_{\s \in S_{\v}} \beta_{\s,\v} \ket{\s} \bra{\sigma_{\v}(\s)} 
\label{eq:Av}
\end{eqnarray}
where $\ket{\sigma_{\v}(\s)} = X_{\v}\ket{\s} = \ket{\v \oplus \s}$,  $\beta_{\s,\v} = \sum_{\u \in \{0,1\}^n} a_{\s,\u} \alpha_{\u,\v}$, and $S_{\v} = \{\s: \beta_{\s,\v} \ne 0\}$. In particular, note that $\sigma_{\v}(\s) = \v \oplus \s$ so that $|\sigma_{\v}(S_{\v})| = |S_{\v}|$. 

It is easy to verify that, for $\v \ne \w$, $A_{\v}A_{\w}^{\dag} = 0$ iff $\sigma_{\v}(S_{\v}) \cap \sigma_{\w}(S_{\w}) = \emptyset$. Moreover, for the completeness condition ($\sum_{\v}A_{\v}A_{\v}^\dag = I$) to hold, it must be the case that $\bigcup_{\v} S_{\v} = \{0,1\}^n$. Thus, if $G = \sum_{\v}A_{\v}$ is $Z$-bias-preserving, then by Theorem~\ref{thm:xz}, the orthogonality and completeness properties hold, so that we have
\begin{align*}
2^n = \left|\bigcup_{\v} S_{\v}\right| & \le \sum_{\v} |S_{\v}| \\ &= \sum_{\v} |\sigma_{\v}(S_{\v})| = \left|\bigcup_{\v} \sigma_{\v}(S_{\v})\right| \le 2^n.
\end{align*}
Therefore, all inequalities in the sequence above are equalities, from which we infer that both $\{S_{\v}: \v \in \{0,1\}^n\}$ and $\{\sigma_{\v}(S_{\v}): \v \in \{0,1\}^n\}$ are partitions of $\{0,1\}^n$. It then follows from \eqref{eq:Av} that $G = \sum_{\v} A_{\v}$ is of the form $\sum_{\s \in \{0,1\}^n} \beta_{\s} \ket{\s} \bra{\tilde{\sigma}(\s)}$ for some $\beta_{\s} \in \C$ and some permutation $\tilde{\sigma}$ acting on $\{0,1\}^n$. Equivalently, via a change of variable, $G = \sum_{\s \in \{0,1\}^n} \gamma_{\s} \ket{\sigma(\s)}\bra{\s}$, where $\sigma = {\tilde{\sigma}}^{-1}$. Finally, for $GG^\dag = I$ to hold, we must have $|\gamma_{\s}|^2 = 1$ for all $\s$, i.e., $\gamma_{\s}$ must have the form $e^{i\phi_{\s}}$. 
\end{proof}

Thus, the characterization of bias-preserving gates by Fellous-Asiani et al.\ \cite{fellous2023scalable} can be recovered using our $ZX$-decomposition framework.

\section{A Discrete Generating Set for Bias-Preserving Gates}\label{sec:discrete-gate-set}

Since the set of bias-preserving gates forms an uncountably infinite subgroup of the unitary group, it would be useful to approximate arbitrary $n$-qubit bias-preserving gates using a discrete set of gates that is also bias-preserving. In this section we give a procedure to do so. The technique mostly relies on procedures that approximate arbitrary unitary matrices using a discrete gate set \cite[Section~4.5]{nielsen_and_chuang}. We also note that this set was also mentioned in \cite[Section~5]{shepherd2008instantaneous} in the context of $Z$-networks.

\begin{lemma}\label{lemma:b1z-from-discrete}
Any gate $G\in \bOz$ can be approximated by gates from the set $\{R_z(\theta), X\}$, where $R_z(\theta)$ is a rotation about the $z$-axis (in the Bloch sphere) by the angle $\theta$, which can be taken to be any irrational multiple of $2\pi$.
\end{lemma}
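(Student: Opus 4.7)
The plan is to combine the permutation characterization of $\bOz$ from Theorem~\ref{thm:bnz-permutation-fellous} with a classical equidistribution argument. First I would specialize Theorem~\ref{thm:bnz-permutation-fellous} to $n=1$: any $G \in \bOz$ acts as a (phase-decorated) permutation of $\{\ket{0},\ket{1}\}$, so $G$ is either diagonal, $G = e^{i\phi_0}\ket{0}\bra{0} + e^{i\phi_1}\ket{1}\bra{1}$, or anti-diagonal, $G = e^{i\phi_0}\ket{1}\bra{0} + e^{i\phi_1}\ket{0}\bra{1}$. In the anti-diagonal case, factoring an $X$ on the left reduces to the diagonal case, so in either case $G = X^k D$ for some $k\in\{0,1\}$ and some diagonal unitary $D = \mathrm{diag}(e^{i\phi_0},e^{i\phi_1})$. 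Up to the unobservable global phase $e^{i(\phi_0+\phi_1)/2}$, the diagonal unitary $D$ coincides with $R_z(\alpha)$ for $\alpha := \phi_1 - \phi_0$.

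Next I would reduce the problem to approximating a single $R_z(\alpha)$, for arbitrary $\alpha \in \real$, by powers of $R_z(\theta)$. Since $R_z(\theta)^m = R_z(m\theta)$, it suffices to show that for any $\alpha$ and any $\varepsilon > 0$ there exists $m \in \N$ with $\|R_z(\alpha) - R_z(m\theta)\|_{\mathrm{op}} < \varepsilon$. A direct computation yields $\|R_z(\alpha) - R_z(\beta)\|_{\mathrm{op}} = |e^{-i\alpha/2} - e^{-i\beta/2}|$, which is controlled by $|\alpha - \beta|$ modulo $4\pi$. The key step is then Kronecker's (equivalently, Weyl's) density theorem: since $\theta/(2\pi)$ is irrational, so is $\theta/(4\pi)$, and hence the orbit $\{m\theta \bmod 4\pi : m \in \N\}$ is dense in $[0,4\pi)$. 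Choosing $m$ so that $m\theta$ is sufficiently close to $\alpha$ modulo $4\pi$ gives the desired single-rotation approximation.

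Finally, to assemble an approximation of the original $G$, I would prepend $X^k$ to the chosen $R_z(\theta)^m$. The resulting circuit $X^k R_z(\theta)^m$ lies in $\bOz$, and since left-multiplication by the unitary $X^k$ preserves the operator-norm distance, it approximates $G$ within $\varepsilon$ in worst-case error.

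I do not expect a serious obstacle here: the structural reduction is immediate from Theorem~\ref{thm:bnz-permutation-fellous}, and the approximation step is a standard Kronecker density argument. The only subtlety worth flagging is bookkeeping of the unobservable global phase that appears when rewriting a diagonal unitary as $R_z(\alpha)$, and correspondingly choosing the right modulus ($4\pi$ rather than $2\pi$) for the equidistribution statement so that the operator-norm bound genuinely goes to zero.
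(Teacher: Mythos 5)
Your proposal is correct and follows essentially the same route as the paper: specialize the permutation characterization to $n=1$ to get a diagonal-or-antidiagonal form, factor out an $X$ to reduce to the diagonal case, identify the diagonal gate with a $z$-rotation up to global phase, and invoke density of $\{m\theta \bmod 2\pi\}$ (or $4\pi$) for irrational $\theta/2\pi$. The only cosmetic difference is that the paper passes through the phase gate $P_Z(\phi)=\mathrm{diag}(1,e^{i\phi})$, which has period $2\pi$, while you work directly with $R_z$ and therefore correctly note that the density statement should be taken modulo $4\pi$ if one wants an honest operator-norm bound without a second global-phase adjustment; your treatment of this point is, if anything, more careful than the paper's terse final sentence.
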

\begin{proof}
    Since $G\in \bOz$, we can write it as either of the following matrices (written in the $Z$ basis),
    $$V_1 := 
    \begin{pmatrix}
        0 & e^{i\phi_1} \\
        e^{i\phi_2} & 0 \\
    \end{pmatrix},
    V_2 := 
    \begin{pmatrix}
        e^{i\phi_3} & 0 \\
        0 & e^{i\phi_4} \\
    \end{pmatrix}.
    $$
    It is enough to prove the lemma for $V_2$, as $V_1$ can be obtained using the $X$ gate, that is, $V_1 = V_2X$.
    Now note that applying $V_2$ is equivalent (up to a global phase) to applying the rotation gate,
    $$
     P_Z(\phi)
     := \begin{pmatrix}
        1 & 0 \\
        0 & e^{i\phi}
    \end{pmatrix}. 
     $$
     where $\phi := \phi_4 - \phi_3$ in $V_2$. 
     Therefore, it is sufficient to use $X, P_Z(\phi)$ for arbitrary $\phi\in \mathbb{R}$.
     
     
     To complete the proof, we note that $P_Z(\phi)$, for any $\phi$ can be approximated by a single $P_Z(\theta)$ gate, where $\theta$ can be taken to be any irrational multiple of $2\pi$.
\end{proof}

\begin{theorem}
    Any $n$ qubit $Z$-bias-preserving gate can be approximated using bias-preserving gates from the gate set
    $$\{X, R_z(\theta), CNOT, CCNOT\},$$
    where $\theta$ is any irrational multiple of $2\pi$ and $CCNOT$ is the Toffoli gate.
\end{theorem}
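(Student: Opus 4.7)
The plan is to invoke Theorem~\ref{thm:partition} to factor an arbitrary $G \in \B_n^Z$ as a product of a classical permutation of computational basis states and a diagonal phase unitary, and then synthesize each factor separately from $\{X, R_z(\theta), CNOT, CCNOT\}$ (allowing a bounded number of ancillary qubits initialized and returned to $\ket{0}$ where convenient). Specifically, Theorem~\ref{thm:partition} gives $G = D_\phi \, \Pi_\sigma$, where
$$
\Pi_\sigma := \sum_{\s \in \{0,1\}^n} \ket{\sigma(\s)}\bra{\s}, \qquad D_\phi := \sum_{\t \in \{0,1\}^n} e^{i \phi_{\sigma^{-1}(\t)}} \ket{\t}\bra{\t}.
$$

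For the permutation factor $\Pi_\sigma$, I would invoke the classical universality of $\{X, CNOT, CCNOT\}$ for reversible Boolean computation: any permutation of $\{0,1\}^n$ can be realized \emph{exactly} by a circuit over this set, using ancillas as scratch space that are restored to $\ket{0}$ at the end. For the diagonal factor $D_\phi$, I would write
$$
D_\phi \ = \ \prod_{\t \in \{0,1\}^n} \Lambda_\t(e^{i \alpha_\t}), \qquad \alpha_\t := \phi_{\sigma^{-1}(\t)},
$$
where $\Lambda_\t(e^{i\alpha})$ applies the phase $e^{i\alpha}$ to $\ket{\t}$ and is the identity on every other basis state. Conjugating $\Lambda_\t$ by $X$-gates on the $0$-positions of $\t$ reduces it to an $(n-1)$-fold-controlled $R_z$-rotation targeting $\ket{1\cdots 1}$, which can be built from $CCNOT$ gates (to compute the AND of the controls onto an ancilla), one controlled $R_z(\alpha)$ (itself decomposed in the standard way into $CNOT$s and single-qubit $R_z$-rotations), and a matching uncomputation step; see \cite[Section~4.3]{nielsen_and_chuang}. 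Finally, because $\theta/(2\pi)$ is irrational, Weyl's equidistribution theorem ensures that $\{R_z(k\theta) : k \in \N\}$ is dense in $\{R_z(\alpha) : \alpha \in \mathbb{R}\}$, so every $R_z(\alpha)$ appearing above can be approximated arbitrarily well by a suitable power of $R_z(\theta)$.

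The main obstacle I anticipate is the book-keeping of approximation errors. The decomposition of $D_\phi$ may involve up to $2^n - 1$ multi-controlled rotations, each using a constant number of approximated single-qubit $R_z$-gates. By the subadditivity of operator-norm error under composition (\cite[Box~4.1]{nielsen_and_chuang}), to approximate $G$ to within overall accuracy $\varepsilon$ it suffices to approximate each individual $R_z(\alpha)$ to within $\varepsilon/N$, where $N$ is the total number of approximated rotations in the circuit; since $N$ is finite, this precision is attainable from the dense subgroup. Because each of $X, R_z(\theta), CNOT, CCNOT$ is itself in $\B^Z$ (each being a basis-state permutation, possibly decorated with phases), Proposition~\ref{prop:bnx-equal-to-normalizer-of-unx} guarantees that every partial product—and hence the final approximating circuit on the enlarged $(n{+}a)$-qubit register—lies in $\B_{n+a}^Z$, so the approximation is realized by a bias-preserving circuit, as required.
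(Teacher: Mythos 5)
Your proposal is correct and takes essentially the same route as the paper's proof: both factor $G$ into a diagonal phase matrix times a computational-basis permutation, implement the permutation exactly over $\{X, CNOT, CCNOT\}$, reduce the diagonal to a product of one-level phase unitaries realized as multi-controlled rotations, and approximate each needed $R_z(\alpha)$ by powers of $R_z(\theta)$ using the irrationality of $\theta/2\pi$. You are more explicit about ancilla usage and per-gate error accumulation, but those are precisely the ``standard techniques'' the paper delegates to Nielsen and Chuang.
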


\begin{proof}
First note that all gates in the proposed gate set are $Z$-bias-preserving, as they are permutation matrices with phases, in the $Z$ basis. Let $G\in \bnz$ be the gate we want to approximate. Since $G$ is a permutation with phases, we can write it as $G = DP$, where $D = \text{diag}(e^{i\phi_1}, \ldots, e^{i\phi_{2^n}})$ and $P$ is a permutation matrix over the computational basis. Since $D$ is a non-singular diagonal matrix and $P$ is a permutation matrix, they are both unitary. We will construct these separately in the rest of the proof.

Note that we can write $P = \sigma_1 \cdots \sigma_t$ for some $t\in\mathbb{N}$, where for all $i\in [t]$, $\sigma_i$ is a transposition (i.e. a 2-unitary), which can be implemented exactly using $\{X,CNOT, CCNOT\}$ using standard techniques from \cite{nielsen_and_chuang}. Similarly, $D$ can be approximated using gates in the proposed gate set. Since $D$ is non-singular, we can write it as the product $D = D_1 \cdots D_{2^n}$ where $D_j$ is the diagonal matrix with element $e^{i\theta_j}$ at the $j$-th index and 1 on the other diagonal indices. $D_j$ is a 1-level unitary which can be approximated using Lemma \ref{lemma:b1z-from-discrete} and standard techniques from \cite{nielsen_and_chuang}.
\end{proof}

\section{Equivalence between Logical and Physical Bias-preserving gates}\label{sec:bipr-at-logical-level}

In this section, we first note the equivalence of bias-preserving gates and incoherent unitaries from coherence resource theories \cite{chitambar2016critical, streltsov2017colloquium}. We then introduce the notion of equicoherent basis/encodings and note that well known class of CSS codes are equicoherent. We prove that for CSS codes under their standard encoding map, the set of bias-preserving gates acting on the physical space of $n$-qubits are equivalent to the bias-preserving gates acting on the logical space of $k$-qubits and vice-versa. We also show that for a certain class of non-equicoherent encoding maps, the set of bias-preserving gates at the physical and the logical level are not equivalent.

 To note that bias-preserving gates and incoherent unitaries are identical, we recall the definitions of incoherent unitaries and coherence rank.
 
\begin{definition}[Incoherent Unitaries \cite{jones2024hadamard, streltsov2017colloquium}]
A unitary $U$ is said to be an $n$ qubit \emph{incoherent unitary} if it can be written as
\begin{equation*}
U = \sum_{j =1}^d e^{i\theta_j}\ket{\pi(j)}\bra{j},
\end{equation*}
where $\pi$ is a permutation on the reference basis $\{\ket{j}\}_{j=1}^d$ of the system.
\end{definition}

It directly follows from Theorem~\ref{thm:bnz-permutation-fellous} that $G$ is a $Z$-bias-preserving gate if and only if it is an incoherent unitary in the $Z$ basis.

\begin{definition}[Coherence Rank $\chi$ \cite{streltsov2017colloquium}]
The \emph{coherence rank} $\chi:\mathcal{H}\rightarrow \mathbb{N}$ of a pure state in Hilbert space $\mathcal{H}$ in a fixed basis $\{\ket{c_i}\}_{i=1}^d$ is the number of non-zero terms needed in the expansion of that state in that basis. That is,
$$\chi(\ket{\psi}) := 
|\{i: a_i \neq 0\}|, \text{where} \ket{\psi} = \sum_i a_i \ket{c_i}.
$$

We take the coherence rank of the 0 vector to be 0. The Hilbert space under consideration while using the notation of $\chi$ will be clear from the context.
\end{definition}
The following proposition establishes that incoherent unitaries/bias-preserving gates preserve the coherence rank of pure states.

\begin{proposition}[\cite{jones2024hadamard}] \label{prop:coherence_rank}
A unitary $G$ is $Z$-bias-preserving, i.e., $G \in \mathbb{B}_n^Z$ if and only if
$$ \chi(G\ket{x}) = \chi(\ket{x}), \forall x \in \{0,1\}^n.$$
 \end{proposition}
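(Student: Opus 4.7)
The plan is to show that this proposition is essentially an immediate reformulation of Theorem~\ref{thm:bnz-permutation-fellous}, with the only real content being a short unitarity argument in the backward direction.

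First I would observe that every computational basis state $\ket{x}$ has coherence rank exactly $1$ (it is a single term in its own expansion over $\{\ket{y}\}_{y \in \{0,1\}^n}$). Hence the condition $\chi(G\ket{x}) = \chi(\ket{x})$ for all $x \in \{0,1\}^n$ is equivalent to the condition that $G\ket{x}$ has coherence rank $1$ for all $x$, i.e., $G\ket{x} = \alpha_x \ket{\sigma(x)}$ for some $\alpha_x \in \C$ and some map $\sigma : \{0,1\}^n \to \{0,1\}^n$.

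For the forward direction, I would apply Theorem~\ref{thm:bnz-permutation-fellous} directly: if $G \in \B_n^Z$, then $G\ket{x} = e^{i\phi_{x,G}}\ket{\sigma_G(x)}$, which clearly has coherence rank $1 = \chi(\ket{x})$. For the backward direction, assume $G\ket{x} = \alpha_x \ket{\sigma(x)}$ for each $x$. Unitarity of $G$ forces $|\alpha_x| = \|G\ket{x}\| = 1$, so $\alpha_x = e^{i\phi_x}$ for some real phase $\phi_x$. Moreover, for $x \ne y$, orthogonality gives
\[
0 = \braket{y | G^\dagger G | x} = \alpha_y^{*}\alpha_x \braket{\sigma(y)|\sigma(x)},
\]
which forces $\sigma(x) \ne \sigma(y)$; thus $\sigma$ is a bijection, i.e., a permutation of $\{0,1\}^n$. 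The ``if'' direction of Theorem~\ref{thm:bnz-permutation-fellous} (or equivalently the ``if'' direction of Theorem~\ref{thm:partition}) then yields $G \in \B_n^Z$.

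There is no real obstacle here; the entire content of the proposition is packaged into Theorem~\ref{thm:bnz-permutation-fellous}, and the only step requiring any calculation is the short unitarity argument above showing that a coherence-rank-preserving unitary on computational basis states must act as a phased permutation. The proof can therefore be presented in just a few lines.
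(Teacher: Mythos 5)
Your proof is correct; the paper cites Proposition~\ref{prop:coherence_rank} from \cite{jones2024hadamard} without reproducing a proof, but the natural argument is exactly the one you give: note that $\chi(\ket{x}) = 1$ for every computational basis state, reduce the forward direction to Theorem~\ref{thm:bnz-permutation-fellous}, and for the converse use unitarity (norm preservation for the unimodular phase, orthogonality preservation for injectivity, hence bijectivity, of $\sigma$) to recover the phased-permutation form required by the ``if'' direction of Theorem~\ref{thm:bnz-permutation-fellous}. Nothing is missing.
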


Now, we will set up the notation of quantum error-correcting codes and also introduce the notion of equicoherent basis/encoding.   
 An $[[n,k,d]]$ quantum error correcting code $\mathcal{Q}$ is associated with an encoding map $\phi_{\text{enc}}:\mathcal{H}_2^{\otimes k}\rightarrow \mathcal{H}_2^{\otimes n}$, with its inverse defined as $\phi_{\text{enc}}^{-1}:\mathcal{Q} \rightarrow \mathcal{H}_2^{\otimes k}$. A state $\ket{\psi}$ is a quantum state in Hilbert space $\mathcal{H}$, which is over $n$ or $k$ qubits depending on the context. In the context of codewords, $\ket{\psi}_L$ is an $n$ qubit quantum state which represents the $k$ qubit state $\ket{\psi}$ at a logical level. So we have, $\ket{\psi}_L = \phi_{\text{enc}}\ket{\psi} = \sum_{x\in \{0,1\}^n} \alpha_x\ket{x}$, where $\sum_x |\alpha_x|^2 = 1$. The set of logical operators $\mathcal{L}_{\mathcal{Q}}$ is the set of gates acting on the physical space of $n$ qubits, which map the codespace $\mathcal{Q}$ to itself, i.e.,
 \begin{equation*}
\mathcal{L}_{\mathcal{Q}} = \{ G \in \mathbb{U}_n \mid \ G \ket{\psi} \in \mathcal{Q}, \forall \ket{\psi} \in \mathcal{Q}\}.
\end{equation*}

We are interested in the problem of characterizing the class of quantum error-correcting codes for which applying bias-preserving gates at the physical level is equivalent to action of corresponding bias-preserving gates at the logical level. Based on Proposition \ref{prop:coherence_rank}, we can expect that the encoding map has to preserve the coherence rank between states in the physical space, which are images of computational basis states under the map $\phi_{\text{enc}}$.  
 Motivated by this, we introduce the notion of \emph{equicoherent encoding maps}. For convenience, we fix the computational basis, but these can be extended to arbitrary bases easily. We define $\mathcal{T}$ to be the function that takes as input a pure state and returns a set of all terms with non-zero coefficients when expanded in the computational basis. For example, consider a pure state in a $3$ qubit system given by $\ket{\psi} = \frac{1}{2} \left(\ket{000} + \ket{011} + \ket{101} + \ket{110} \right)$, then $\mathcal{T}(\ket{\psi}) = \{ 000, 011, 101, 110\}$.

\begin{definition} \label{def:equicoherent} 
Let $\mathcal{Q}$ be a quantum error correcting code with encoding map $\phi_{\text{enc}}$. The representation of the basis states of the $k$-qubit logical space in the $n$-qubit physical codespace is said to form an \emph{equicoherent basis} under the encoding map if:
\begin{enumerate}
    \item For $x,y \in \{0,1\}^k$ where $|x \big>, |y \big>$ are two arbitrary basis states in the logical space, we have, $$\chi(\ket{x}_L) = \chi(\ket{y}_L).$$
    \item For $x, y\in \{0,1\}^k$ where $|x \big>, |y \big>$ are two arbitrary basis states in the logical space, we have, $$\mathcal{T}(\ket{x}_L)\ \cap\ \mathcal{T}(\ket{y}_L) = \emptyset,$$
  where for $z \in \{0,1\}^k$, we have $\ket{z}_L := \phi_{\text{enc}}\ket{z}$.  
\end{enumerate}
The encoding map $\phi_{\text{enc}}$ is then referred to as an \emph{equicoherent encoding}.
\end{definition}

We highlight some important properties of equicoherent encodings that we shall use later in the section.
\begin{proposition} \label{prop:prop-of-equicoherent-enc}
    Let $\mathcal{Q}$ be a quantum error correcting code with equicoherent encoding $\phi_{\text{enc}}$. Let $l$ be the coherence rank of the computational basis states, i.e., $\chi(\ket{x}_L) = l \ \forall x \in \{0,1\}^k$. Then, the following hold:
    \begin{enumerate}
        \item Let $\ket{\psi} = \sum_{x\in \{0,1\}^k}\alpha_x \ket{x}$ be a $k$-qubit state. Then, $\chi(\ket{\psi}_L) = \chi(\sum_{x\in \{0,1\}^k}\alpha_x\ket{x}_L) = \sum_{x\in \{0,1\}^k}\chi(\ket{x}_L)$. 
        \item If $\ket{\psi} \in \mathcal{Q}$ is such that $\chi(\ket{\psi}) = l$, then $\phi_{\text{enc}}^{-1} \ket{\psi} = e^{i\beta} \ket{x}$ for some $x\in \{0,1\}^k$ and some phase $\beta$.
        \item Let $\ket{\psi}, \ket{\varphi}$ be $k$-qubit states. Then, $\chi(\ket{\psi}) = \chi(\ket{\varphi})$ if and only if $\chi(\phi_{\text{enc}}\ket{\psi}) = \chi(\phi_{\text{enc}}\ket{\varphi})$.
    \end{enumerate}
\end{proposition}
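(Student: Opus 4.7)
The plan is to prove the three parts in order, with part (1) doing all the real work and parts (2) and (3) following as easy consequences. Let $l := \chi(\ket{x}_L)$, which by Definition~\ref{def:equicoherent}(1) is independent of $x \in \{0,1\}^k$.

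For part (1), I would use the linearity of $\phi_{\text{enc}}$ to write $\ket{\psi}_L = \sum_{x} \alpha_x \ket{x}_L$, and then decompose each $\ket{x}_L$ in the computational basis of the $n$-qubit space as $\ket{x}_L = \sum_{z \in \mathcal{T}(\ket{x}_L)} \gamma_{z,x} \ket{z}$, where every $\gamma_{z,x} \ne 0$. Substituting gives
\begin{equation*}
\ket{\psi}_L \ = \ \sum_{x: \alpha_x \ne 0} \ \sum_{z \in \mathcal{T}(\ket{x}_L)} \alpha_x \gamma_{z,x} \ket{z}.
\end{equation*}
The crucial observation, which I would emphasize as the main (and essentially only) technical step, is that the disjointness condition (Definition~\ref{def:equicoherent}(2)) guarantees that no $\ket{z}$ appears in two distinct inner sums. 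Hence there is no possibility of cancellation across different values of $x$, and within a single $x$ with $\alpha_x \ne 0$ every coefficient $\alpha_x \gamma_{z,x}$ is non-zero. Counting the surviving basis vectors therefore yields
\begin{equation*}
\chi(\ket{\psi}_L) \ = \ \sum_{x: \alpha_x \ne 0} |\mathcal{T}(\ket{x}_L)| \ = \ l \cdot \chi(\ket{\psi}),
\end{equation*}
which is the content of part (1) (read as a sum over $x$ with $\alpha_x \ne 0$).

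Parts (2) and (3) then follow quickly. For part (2), any $\ket{\psi}\in\mathcal{Q}$ has $\phi_{\text{enc}}^{-1}\ket{\psi} = \sum_x \alpha_x \ket{x}$; applying part (1) gives $\chi(\ket{\psi}) = l \cdot |\{x : \alpha_x \ne 0\}|$, so the hypothesis $\chi(\ket{\psi}) = l$ forces exactly one $\alpha_x$ to be non-zero. Unitarity of $\phi_{\text{enc}}$ (or just norm preservation on its image) then forces that coefficient to have modulus $1$, giving $\phi_{\text{enc}}^{-1}\ket{\psi} = e^{i\beta}\ket{x}$. For part (3), apply part (1) to both $\ket{\psi}$ and $\ket{\varphi}$ to get $\chi(\phi_{\text{enc}}\ket{\psi}) = l\cdot \chi(\ket{\psi})$ and $\chi(\phi_{\text{enc}}\ket{\varphi}) = l\cdot\chi(\ket{\varphi})$; since $l \ge 1$, dividing (or multiplying) by $l$ gives the iff statement immediately.

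The only place where the argument could go wrong is the no-cancellation claim in part (1), so the proof write-up should make the use of the disjointness condition explicit and point out that without it the coherence rank could drop due to accidental cancellations between $\alpha_x \gamma_{z,x}$ terms coming from different $x$.
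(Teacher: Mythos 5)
Your proof is correct, and it supplies the details that the paper itself omits: the authors simply remark that ``the proofs of these properties are straightforward'' and give no argument, so there is no paper proof to compare against. Your identification of the disjointness condition (Definition~\ref{def:equicoherent}(2)) as the key to ruling out cancellations is exactly the right technical step, and the derivations of parts~(2) and~(3) from part~(1) are clean. One thing worth flagging explicitly in a final write-up: the statement of part~(1) as printed, $\sum_{x \in \{0,1\}^k} \chi(\ket{x}_L)$, literally evaluates to $l\cdot 2^k$ independently of $\ket{\psi}$, which is plainly not intended; it should either be restricted to $x$ with $\alpha_x \ne 0$ (your reading) or be written as $\sum_x \chi(\alpha_x \ket{x}_L)$, using the paper's convention that the zero vector has coherence rank~$0$. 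Your parenthetical remark catches this, but it deserves to be stated as an outright correction to the proposition rather than as an aside.
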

The proofs of these properties are straightforward. \\

The class of CSS codes is an important example of quantum codes having an equicoherent basis. A CSS code \cite{calderbank1996good,steane1996error} is a quantum stabilizer code \cite{gottesman1997stabilizer} which can be defined using two classical linear codes $\mathcal{C}_1 \subset \mathcal{C}_2$. The $X$-type stabilizer generators are obtained from the rows of the parity-check matrix of $\mathcal{C}_1^\perp$, and the $Z$-type stabilizer generators are obtained from rows of the parity-check matrix of $\mathcal{C}_2$. The inclusion $\mathcal{C}_1 \subset \mathcal{C}_2$ ensures that the $X$-type stabilizers and $Z$-type stabilizers commute. The CSS code $\mathcal{Q}$ is the common +1-eigenspace of all the stabilizers. The number of logical qubits it encodes is $k = \dim(\mathcal{C}_2/\mathcal{C}_1) = \dim(\mathcal{C}_2) - \dim(\mathcal{C}_1)$.

It is a standard fact (see e.g., \cite[Section~10.4.2]{nielsen_and_chuang}) that the quantum states $\ket{x+\mathcal{C}_1} := \frac{1}{\sqrt{|\mathcal{C}_1|}}\sum_{y\in \mathcal{C}_1} \ket{x+y}$, $x \in \mathcal{C}_2/\mathcal{C}_1$, constitute a basis of $\mathcal{Q}$. This basis is clearly equicoherent. A \emph{standard encoding} of the CSS code $\mathcal{Q}$ is any equicoherent encoding that maps the computational basis states of the logical space to this equicoherent basis.

Consider the set of bias-preserving gates applied at the logical level $\mathbb{B}_k^Z$. The transformations at the physical level that are equivalent to bias-preserving gates at the logical level form a subset of the logical operators $\mathcal{L}_{\mathcal{Q}}$. We are interested in the question of when the logical operators are bias-preserving gates at the physical level. Hence, we consider the following set of transformations:
\begin{equation*}
\mathbb{B}_{n,\mathcal{Q}}^{Z} = \{ G|_{\mathcal{Q}} \mid G \in \mathbb{B}_n^Z \cap \mathcal{L}_{\mathcal{Q}} \}.
\end{equation*}
The notation $G|_{\mathcal{Q}}$ refers to the restriction of the map $G$ to the subspace $\mathcal{Q}$. We will refer to $\mathbb{B}_{n,\mathcal{Q}}^{Z}$ as the $\mathcal{Q}$-restricted set of bias-preserving gates at the physical level. The elements in 
$\mathbb{B}_{n,\mathcal{Q}}^{Z}$ (since they are also restricted logical operators) can be expressed as $\phi_{\text{enc}} \circ G \circ \phi_{\text{enc}}^{-1}$ for some $G \in \mathbb{U}_k$.

\begin{theorem}
    For a CSS code $\mathcal{Q}$ with a standard encoding $\phi_{\text{enc}}$, we have 
    $$
    \mathbb{B}_{n,\mathcal{Q}}^Z = \phi_{\text{enc}} \mathbb{B}_k^Z \phi_{\text{enc}}^{-1} := \{\phi_{\text{enc}} \circ G \circ \phi_{\text{enc}}^{-1} : G \in \B_k^Z\}.
    $$
    Thus, the set of bias-preserving gates at the logical level is isomorphic to the $\mathcal{Q}$-restricted set of bias-preserving gates at the physical level. 
\end{theorem}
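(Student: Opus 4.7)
The plan is to prove the two set inclusions separately, leaning on Theorem~\ref{thm:bnz-permutation-fellous} and Proposition~\ref{prop:prop-of-equicoherent-enc} together with the explicit coset structure of the standard CSS codewords $\ket{x+\mathcal{C}_1} = \frac{1}{\sqrt{|\mathcal{C}_1|}}\sum_{c \in \mathcal{C}_1}\ket{x+c}$, each of which has coherence rank $l := |\mathcal{C}_1|$.

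For the inclusion $\mathbb{B}_{n,\mathcal{Q}}^Z \subseteq \phi_{\text{enc}} \mathbb{B}_k^Z \phi_{\text{enc}}^{-1}$, I would take $\tilde{G} \in \mathbb{B}_n^Z \cap \mathcal{L}_{\mathcal{Q}}$ and set $G := \phi_{\text{enc}}^{-1} \circ \tilde{G}|_{\mathcal{Q}} \circ \phi_{\text{enc}}$. By Theorem~\ref{thm:bnz-permutation-fellous}, $\tilde{G}$ sends each computational basis state to a phased computational basis state, so by linearity $\chi(\tilde{G}\ket{x}_L) = \chi(\ket{x}_L) = l$ for every $x \in \{0,1\}^k$. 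Since $\tilde{G}\ket{x}_L \in \mathcal{Q}$, property~2 of Proposition~\ref{prop:prop-of-equicoherent-enc} forces $\tilde{G}\ket{x}_L = e^{i\beta_x}\ket{\tau(x)}_L$ for some $\tau(x) \in \{0,1\}^k$ and phase $\beta_x$; equivalently, $G\ket{x} = e^{i\beta_x}\ket{\tau(x)}$. Unitarity of $G$ forces $\tau$ to be a bijection, so Theorem~\ref{thm:bnz-permutation-fellous} concludes $G \in \mathbb{B}_k^Z$.

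For the reverse inclusion, given $G \in \mathbb{B}_k^Z$ with $G\ket{x} = e^{i\phi_x}\ket{\sigma(x)}$, I would build an explicit $\tilde{G} \in \mathbb{B}_n^Z$ that preserves the codespace and restricts to $\phi_{\text{enc}} \circ G \circ \phi_{\text{enc}}^{-1}$ on it. Choose coset representatives $r_x \in \mathcal{C}_2$ so that $\phi_{\text{enc}}\ket{x} = \ket{r_x + \mathcal{C}_1}$. Define a permutation $\tilde{\sigma}$ on $\{0,1\}^n$ by $\tilde{\sigma}(r_x + c) := r_{\sigma(x)} + c$ for each $x \in \{0,1\}^k$ and $c \in \mathcal{C}_1$, and $\tilde{\sigma}(y) := y$ on $\{0,1\}^n \setminus \mathcal{C}_2$; assign phases $\tilde{\phi}_{r_x + c} := \phi_x$ on $\mathcal{C}_2$ and $\tilde{\phi}_y := 0$ elsewhere. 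The unitary $\tilde{G}\ket{\s} := e^{i\tilde{\phi}_\s}\ket{\tilde{\sigma}(\s)}$ then lies in $\mathbb{B}_n^Z$ by Theorem~\ref{thm:bnz-permutation-fellous}, and a one-line computation on each codeword $\ket{r_x + \mathcal{C}_1}$ yields $\tilde{G}\ket{r_x + \mathcal{C}_1} = e^{i\phi_x}\ket{r_{\sigma(x)} + \mathcal{C}_1}$, matching $\phi_{\text{enc}} G \phi_{\text{enc}}^{-1}\ket{r_x + \mathcal{C}_1}$.

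The main obstacle lies in the $\supseteq$ direction: the extension of the logical permutation to a physical one must respect the $\mathcal{C}_1$-coset decomposition of $\mathcal{C}_2$, and the phases must be constant on each such coset, otherwise $\tilde{G}$ would destroy the uniform superposition structure of the codewords. This is precisely where equicoherence and the CSS coset structure are used. The $\subseteq$ direction is essentially bookkeeping once Proposition~\ref{prop:prop-of-equicoherent-enc}(2) identifies minimum-coherence-rank codewords as logical basis states.
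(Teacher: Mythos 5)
Your proof is correct and follows essentially the same strategy as the paper's: the $\subseteq$ direction uses the coherence-rank preservation of $Z$-bias-preserving gates plus Proposition~\ref{prop:prop-of-equicoherent-enc}(2) to pull the physical action back to a permutation-with-phases on the logical basis, and the $\supseteq$ direction lifts a logical permutation to a physical one supported on the disjoint sets $\mathcal{T}(\ket{x}_L)$. The one place you are more explicit than the paper is the lift: the paper merely stipulates ``some partial permutation $\gamma_s$ from $\mathcal{T}(\ket{s}_L)$ to $\mathcal{T}(\ket{\pi(s)}_L)$'' with a constant phase $e^{i\beta_s}$ on each block, implicitly using the fact that the codeword $\ket{r_s+\mathcal{C}_1}$ is a \emph{uniform} superposition so any bijection of the support works; you instead pin down the specific bijection $r_x + c \mapsto r_{\sigma(x)} + c$ tied to the coset structure, which makes the verification $\tilde{G}\ket{r_x+\mathcal{C}_1} = e^{i\phi_x}\ket{r_{\sigma(x)}+\mathcal{C}_1}$ a genuine one-line calculation. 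That concreteness is a mild improvement in rigor; the underlying ideas and the dependence on equicoherence (disjoint supports, constant coherence rank) are identical to the paper's.
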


\begin{proof}
Consider a $\mathcal{Q}$-restricted gate $\bar{G} \in \mathbb{B}_{n,\mathcal{Q}}^Z$. $\bar{G}$ maps $\mathcal{Q}$ onto $\mathcal{Q}$, and also preserves the coherence rank. By Property~2 of Proposition~\ref{prop:prop-of-equicoherent-enc}, it follows that, for all  $s \in \{0,1\}^k$, $\bar{G}$ maps $\phi_{\text{enc}}\ket{s}$ to $\phi_{\text{enc}}\left(e^{i\beta_s}\ket{\pi(s)}\right)$, where $\beta_s$ is a phase and $\pi$ is some permutation of $\{0,1\}^k$. In other words, $G := \phi_{\text{enc}}^{-1} \circ \bar{G} \circ \phi_{\text{enc}}$ maps each $k$-qubit computational basis state $\ket{s}$ to $e^{i\beta_s}\ket{\pi(s)}$. By Theorem~\ref{thm:bnz-permutation-fellous}, $G$ is in $B_k^Z$. We have thus found a $G \in \B_k^Z$ such that $\bar{G} = \phi_{\text{enc}} \circ G \circ \phi_{\text{enc}}^{-1}$.
    
 For the converse, we would like to show that for each logical bias-preserving gate $G \in \mathbb{B}_{k}^{Z}$, there exists a unique $\mathcal{Q}$-restricted bias-preserving gate at the physical level, which is equivalent. Consider $G \in \mathbb{B}_{k}^{Z}$. We have that
$$ G\ket{s} = e^{i \beta_s} \ket{\pi(s)}, $$ where $\pi$ is some permutation of $\{0,1\}^k$. We will now identify the bias-preserving gate in $\bnz$, which will result in identical map in the codespace $\mathcal{Q}$. Let $\gamma_s$ denote a partial permutation which maps elements in $\mathcal{T}(\ket{s}_L)$ to those in $\mathcal{T}(\ket{\pi(s)}_L)$. Since CSS codes satisfy the following condition:
$ \mathcal{T}(\ket{s}_L) \cap \mathcal{T}(\ket{t}_L) = \phi, \forall s \neq t$, we can combine all the partial permutations $\gamma_s$ to obtain a permutation $\gamma_G$ which acts on $n$ qubit computational basis states. The multiplicative phase factor associated with each partial permutation is $e^{i\beta_s}$. The permutation $\gamma_G$ and the phase factors $\{e^{i\beta_s}, s \in \{0,1\}^k \}$, together define a bias-preserving gate in $\bnz$. Let the gate be denoted by $\hat{G}$. Note that $\hat{G}$ satisfies 
\begin{equation*}
\phi_{\text{enc}}G = \hat{G} \phi_{\text{enc}}.
\end{equation*}

Also, note that $\hat{G}$ is not unique in general but $\hat{G}|_{\mathcal{Q}}$ is unique because $\phi_{\text{enc}}G = \hat{G} \phi_{\text{enc}}$. Hence, $\hat{G}|_{\mathcal{Q}}$ is given by
$\hat{G}|_{\mathcal{Q}} = \phi_{\text{enc}} \circ G \circ \phi_{\text{enc}}^{-1}$.
\end{proof}

We end this section by noting that the above theorem fails to hold for codes with non-equicoherent encodings. Indeed, suppose that $\mathcal{Q}$ is a quantum code with an encoding map that does not satisfy Condition~1 in Definition~\ref{def:equicoherent}. Then, there exist computational basis states $\ket{s}$ and $\ket{t}$ in the $k$-qubit Hilbert space such that $\chi({\ket{s}_L}) \neq \chi({\ket{t}_L})$. Consider a bias-preserving gate $G \in \mathbb{B}_k^Z$ that maps $\ket{s}$ to $\ket{t}$. Since $\chi({\ket{s}_L}) \neq \chi({\ket{t}_L})$, there is no corresponding bias-preserving gate in $\bnz$ that maps $\ket{s}_L$ to $\ket{t}_L$. Thus, there is no $\mathcal{Q}$-restricted bias-preserving gate acting on the physical qubits that realizes the action of $G$ on the logical qubits.

\section*{Acknowledgement} This work was supported in part by the grant DST/INT/RUS/RSF/P-41/2021 from the Department of Science and Technology, Government of India. Aryaman Manish Kolhe would like to acknowledge helpful discussions with Mahathi Vempati.

\bibliography{references}{}

\end{document}